\newtheorem*{swt}{Specialized Representation Wigner Theorem}
\newtheorem*{lemma}{Lemma}
\newtheorem*{cor}{Corollary 1}
\newtheorem*{cor2}{Corollary 2}
\newtheorem*{wt}{Wigner's Theorem}
\newtheorem*{gwt}{Representation Wigner Theorem}
\newtheorem*{gns}{GNS Theorem}
\theoremstyle{definition}
\newtheorem*{defn}{Definition}
\newcommand{\norm}[1]{\|#1\|}
\def\a{\alpha}
\def\Om{\Omega}
\def\om{\omega}
\newcommand{\Hil}{\mathcal{H}}
\newcommand{\Alg}{\mathfrak{A}}
\def\2#1{{\mathfrak #1}}
\def\3#1{{\mathcal #1}}
\def\7#1{{\mathbb #1}}
\newcommand{\al}{\alpha}
\begin{document}

\title{How is spontaneous symmetry breaking possible?}
\author{David John Baker\thanks{djbaker@umich.edu} \space and Hans
Halvorson\thanks{hhalvors@princeton.edu}}
\date{}

\maketitle
\begin{abstract}
  We pose and resolve a seeming paradox about spontaneous symmetry
  breaking in the quantum theory of infinite systems.  For a symmetry
  to be spontaneously broken, it must not be implementable by a
  unitary operator.  But Wigner's theorem guarantees that every
  symmetry is implemented by a unitary operator that preserves
  transition probabilities between pure states.  We show how it is
  possible for a unitary operator of this sort to connect the folia of
  unitarily inequivalent representations.  This result undermines
  interpretations of quantum theory that hold unitary equivalence to
  be necessary for physical equivalence.
\end{abstract}

\section{Introduction}

The precise mathematical definition of spontaneous symmetry
breaking (SSB) in quantum theory is somewhat up for grabs.  But
all hands agree that, in the case of infinitely many degrees of
freedom, unitarily inequivalent representations are needed.

In physics more generally, SSB occurs when a ground state is not
invariant under a symmetry of the laws.  This means that a
symmetry transformation will take a ground state to another
(mathematically distinct) ground state.  But in quantum field
theory, an (irreducible) Hilbert space representation of the
commutation relations can include only a single vacuum state.  So
a spontaneously broken symmetry must map between different
unitarily inequivalent representations.

Thus for SSB to occur in infinite quantum theory, the broken
symmetry must not be implemented by a unitary operator. This is
generally agreed to be a necessary condition \citep{GEandCLssb}
and is sometimes taken to be both necessary and sufficient
\citep{JEroughssb,FSsymbreaking}.

Put this way, it can be difficult to see how a symmetry can
possibly be spontaneously broken.  The difficulty arises from an
apparent conflict with Wigner's unitary-antiunitary theorem, a
foundational result that applies to all quantum theories. Since a
symmetry ought to preserve all the empirical predictions of a
quantum state, it must not change the transition probabilities
between pure states, which are represented by the inner products
between vectors in a Hilbert space representation.  Wigner's
theorem shows that any mapping that preserves these probabilities
for all vector states in a Hilbert space must be a unitary
mapping.

All symmetries preserve transition probabilities in this way.  Besides
being physically intuitive, this can be proven rigorously even in
paradigm cases of SSB.  But this seems to lead to paradox.  We know
SSB is possible in infinite quantum theory --- there are well-known
examples.  But we also know that any symmetry, even a broken one, must
satisfy the premises of Wigner's theorem.  Since SSB requires unitary
inequivalence, these two facts appear inconsistent.

This inconsistency must be only apparent.  Our task is to explain
why.  We'll begin by explaining some general features that apply
in all cases of quantum SSB.
We will show that in such cases, Wigner's theorem applies.  The
seeming paradox therefore threatens.  To resolve it, we'll show
that the existence of a unitary symmetry in Wigner's sense does
not entail the unitary equivalence of the Hilbert space
representations it connects.  There remains a sense in which the
symmetry is not (strictly speaking) unitarily implementable.

\section{Quantum SSB}

We begin by recalling some general properties of quantum theories on
the algebraic approach.  At the broadest level of generality, a
quantum theory is described by a $C^*$-algebra $\Alg$ obeying the
canonical commutation or anticommutation relations (CCRs or CARs
respectively) in their bounded form. This is either an algebra of
observables or (as in the present case) a field algebra.  The
self-adjoint operators in $\Alg$ stand for physical quantities and are
often called observables. The states of the algebra are the possible
assignments of expectation values to the operators in $\Alg$.  These
are given by normed linear functionals $\omega: \Alg \rightarrow
\mathbb{C}.$ The expectation value of $A$ in state $\om$ is written
$\om(A)$.

A clear connection exists between this algebraic formalism and the
better-known Hilbert space formalism. The abstract algebraic states
and observables can be concretely realized by a Hilbert space
representation of $\Alg$ (also called a representation of the
CCRs/CARs).  Such a representation is a mapping $\pi$ from $\Alg$ into
the algebra of bounded operators $B(\Hil)$ on a Hilbert space $\Hil$.
The representation map is not usually a bijection; Hilbert space
representations will include more operators, and in particular more
observables, than the $C^*$-algebra.  Some of the states $\omega$ of
$\Alg$ will be representable by density operators on $\Hil$ that agree
with their expectation values for all $A$ in $\Alg$.  Collectively
these states are called the \emph{folium} of the representation $\pi$.

Every algebraic state $\omega$ has a unique ``home''
representation in which it is given by a cyclic vector.  This is
established by the
\begin{gns} For each state $\om$ of $\Alg$, there is a representation
  $\pi$ of $\Alg$ on a Hilbert space $\Hil$, and a vector $\Om \in
  \Hil$ such that $\om (A)=\langle \Om ,\pi (A) \Om \rangle$, for all
  $A\in \Alg$, and the vectors $\{ \pi (A)\Om : A\in \Alg\}$ are dense
  in $\Hil$.  (Call any representation meeting these criteria a \emph{GNS representation.})  The GNS representation is unique in the sense that for any other representation $(\3H ',\pi ',\Om ')$ satisfying the previous
  two conditions, there is a unique unitary operator $U:\Hil \to \Hil
  '$ such that $U\Om =\Om '$ and $U\pi (A)=\pi '(A)U$, for all $A$ in
  $\Alg$  \emph{\citep[see][278--279]{kr}}.\end{gns}The definition of
  ``same representation'' presumed in this statement of uniqueness is called \emph{unitary
  equivalence}.  We call representations $\pi$ and $\pi'$
\emph{unitarily inequivalent}, and treat them as
distinct,\footnote{Although we always treat inequivalent
  representations as formally or mathematically distinct, note that
  they may not always be physically inequivalent.  As we shall see,
  they are sometimes related by symmetries, which are normally assumed
  to preserve all the physical facts.} if there is no unitary operator
$U$ between their Hilbert spaces which relates the representations by
\begin{equation}\label{EQuequiv}
    U\pi(A)=\pi' (A)U.
\end{equation}
When equation (\ref{EQuequiv}) does hold, we say that the unitary $U$ \emph{intertwines} the representations $\pi$ and $\pi'$.



A useful source on the representation of symmetry in this
framework is \citet{JRandGRbasicaqft}.  They posit (very
reasonably) that any symmetry of a quantum system must at a
minimum consist of two bijections, $\a$ from the algebra of
physical quantities $\Alg$ onto itself and $\a'$ from the space of
states of $\Alg$ onto itself.  These must preserve all expectation
values, so that
\begin{equation}\label{EQsym}
    \a'(\om(\a(A)))=\om(A).
\end{equation}
They then show that any such $\a$ is a $*$-automorphism of $\Alg$, a
bijection $\a:\Alg \to \Alg$ which preserves its algebraic structure
and commutes with the adjoint mapping $(\cdot)^*$.  Furthermore,
$\alpha'$ can be defined in terms of this $*$-automorphism as it acts
on states. Clearly $\a'(\omega)$ is given by $\om \circ \a^{-1}=\omega(\a^{-1}(A))$. We
therefore have a justification, from physical principles, of the
oft-cited fact that symmetries in quantum theory are given by
$*$-automorphisms.

Clearly if $\a$ is a symmetry and $\pi(A)$ is a representation of $\Alg$ on a Hilbert space $\Hil$, $\pi \circ \a(A)=\pi(\alpha(A))$ is also a representation of $\Alg$ on $\Hil$.   In this case $\a$ will act as a bijective mapping from $\pi$ to $\pi \circ \a$.  We call $\alpha$ \emph{unitarily implementable in the representation $\pi$} when there is a unitary mapping $U:\3H\to\3H$ such that
\begin{equation}\label{EQimplem}
  \pi'(A)=\pi(\alpha(A))= U \pi(A) U^* .
\end{equation}
This means the symmetry $\alpha$ is unitarily implementable in $\pi$ iff $\pi$ and
$\pi \circ \a$ are unitarily equivalent representations of $\Alg$.

This is where spontaneous symmetry breaking comes in.  In general, a state $\om$ breaks the symmetry $\a$ only if $\a$ is not unitarily implementable in $\om$'s GNS representation.\footnote{On the approach shared by Strocchi and Earman, this is both a necessary and sufficient condition.}  When this occurs, $\om$'s GNS representation and the GNS representation of the symmetry-transformed state $\a'(\om)=\om \circ \a^{-1}$ will be unitarily inequivalent.

In one example, the CAR algebra (so-called because it obeys the
canonical anticommutation relations) is the field algebra for a system
of interacting spin-1/2 systems.  We may use the infinite version of
the algebra to represent an infinitely long chain of spins confined to
a one-dimensional lattice, as in the Heisenberg model of a
ferromagnet.  This infinite CAR algebra possesses a non-unitarily implementable
automorphism which represents a symmetry of the ferromagnet: namely, a
180-degree rotation which flips all of the spins in the chain.  The
rotation is therefore a spontaneously broken symmetry. See
\citet{LRferro} for a detailed study of this case; we present only a
few general features it shares in common with other examples of SSB.

The lowest-energy states available to an infinite spin chain are
ones in which all of the spins align in the same direction. The
Heisenberg ferromagnet has two such ground states: $\om$, in which
all of the spins point along $+x$ (where $x$ is the axis of the
one-dimensional chain) and $\om'$, in which they all point along
$-x$.  These states each define a GNS representation ($\pi,\pi'$
respectively) on Hilbert spaces $\Hil$ and $\Hil'$.   If $\alpha$
is the automorphism of the CAR algebra representing a 180-degree
rotation along an axis perpendicular to $x$, $\om'=\alpha(\om)$.
But since $\alpha$ is spontaneously broken, $\pi$ must be
unitarily inequivalent to $\pi'$.  This is where the seeming
paradox comes in.

\section{Wigner's theorem and the paradox}

The expectation values of observables aren't the only important
quantities in quantum physics.  We should also expect a symmetry
to preserve the transition probabilities between (pure) states of
any quantum theory.  In the Hilbert space formalism these are
given by inner products: $\langle \psi,\psi' \rangle$ represents the
likelihood of a spontaneous transition from  vector state
$\psi$ to $\psi'$.\footnote{Only on collapse
interpretations do such transitions actually occur, of course. But
we should nevertheless expect other interpretations to retain the
statistical predictions codified in transition probabilities.}

It seems obvious that no symmetry worth its salt will alter any of
the transition probabilities.  This led Wigner to conclude that
any symmetry worth its salt is given by a unitary
operator.\footnote{Or by an antiunitary operator; for present
purposes we ignore the difference.}  For the following can be
proven \citep[see][]{VBwignersthm}:
\begin{wt} Any bijection from the unit rays (vectors states) of a Hilbert space $\Hil$
  to the unit rays of $\Hil'$ which preserves the inner product is
  given by a unitary mapping $W:\Hil\to\Hil'$.
\end{wt}
This is puzzling.  A spontaneously broken symmetry should still map
bijectively between the vector states of the two representations, and
it would be very strange if it failed to preserve transition
probabilities.  But we also know that broken symmetries are not
unitarily implementable.  This seems impossible, but since well-known
examples of SSB exist, paradox threatens.

We are not the first to notice this problem.  Considering the
$C^{*}$-algebra $B(\Hil)$ of all bounded operators on a Hilbert space,
Earman presents the problem as follows:
\begin{quote}
Obviously, if $U$ is unitary then [the map $A \rightarrow UAU^*$]
is an automorphism of $B(\Hil)$.  Conversely, any automorphism of
$B(\Hil)$ takes this form for a unitary $U$ -- ... $T$ is an inner
automorphism\footnote{An automorphism $\alpha$ is inner just in
case there is a unitary $V\in \mathfrak{A}$ such that $\alpha
(A)=VAV^*$ for all $A\in \mathfrak{A}$}
 of the $C^*$-algebra $B(\Hil)$.  How then can a symmetry
in the guise of an automorphism of a $C^*$-algebra $\Alg$ fail to be
an unbroken or Wigner symmetry [one which preserves transition
probabilities]?
 \citep[341]{JEroughssb}
\end{quote}
Earman provides an answer, but not a satisfactory one.  He notes
correctly that in cases of SSB, the automorphism that represents
the broken symmetry is not an automorphism of all of $B(\Hil)$.
Instead it is defined only on the field algebra or algebra of
observables (the CAR algebra in the ferromagnet example).  In a
Hilbert space representation this distinguished algebra is given
by a much smaller subalgebra of $B(\Hil)$.  So since the symmetry
is not an automorphism of $B(\Hil)$, there is no reason to assume
it should be unitary in the first place.

This only appears to resolve the problem because Earman has presented
it in a limited form.  He considers the symmetry solely as it acts on
operators, but the paradox can only be properly motivated by
considering the symmetry's action on states.  Whether $\alpha$ is an
automorphism of $B(\Hil)$ or not, $\a'$ is still a bijection between
the states of representations $\Hil$ and $\Hil'$.  If it's not
unitary, it must fail to meet one of the premises of Wigner's theorem.

The problem is that the premises of Wigner's theorem are satisfied by
\emph{all} symmetries.  Consider $\a'$ as it maps between the GNS
representations of two ground states.  Since a symmetry always takes
pure states to pure states, and the vector states of a GNS
representation are all and only its pure states, $\a'$ is indeed a
bijection of vectors states (unit rays).  Furthermore, as
\citet[335]{JRandGRbasicaqft} prove, $\a'$ must preserve all
transition probabilities.  Wigner's theorem therefore applies.

Earman is not alone in struggling with this seeming paradox.  Other authors have suggested (contrary to the theorem of Roberts and Roepstorff) that perhaps spontaneously broken symmetries do fail to preserve transition probabilities.  For example, \citet[302 fn 111]{AAfields} claims that, because ``the preservation of probabilities in the Hilbert space formulation implies the existence of a unitary (or antiunitary) operator,'' no mapping between the folia of inequivalent representations -- presumably not even a symmetry transformation -- can preserve transition probabilities. Even \citet[483]{LRferro} makes a rare slip on this point:
\begin{quote} A state $\om$ breaks a dynamical symmetry $\a$ iff $\a$
  is not unitarily implementable on $\om$'s GNS representation.
  Unlike rotations on ground states of the finite spin chain,
  \emph{dynamical symmetries do not conserve transition probabilities}
  in the GNS representation of states that break them. (Emphasis
  ours) \end{quote} But to the contrary, by the very definition of a symmetry, all
symmetries --- whether broken or not --- preserve transition
probabilities.  Therefore, broken symmetries satisfy the premises of
Wigner's theorem.

So the answer to Earman's question --- how can a symmetry fail to be a
Wigner-type unitary symmetry? --- is that it cannot.  And yet, there
are unitarily non-implementable symmetries.  Paradox threatens.

\section{The resolution}

Since Wigner's theorem applies to all symmetries, a spontaneously
broken symmetry must in some sense give a unitary mapping between the
states of unitarily inequivalent representations.  So there must be
some wiggle room in the definition of unitary equivalence that makes
this possible.  To resolve the paradox, we must look again at the
definition and find the wiggle room.

In effect, we have two data points to work with.  First, as Roberts and Roepstorff prove, Wigner's theorem applies to all algebraic symmetries.  This means that any symmetry $\a'$, as it acts on states, must be implemented by some unitary operator.  Since the existence of this operator is guaranteed by Wigner's theorem, we'll call it the \emph{Wigner unitary} $W$.  Since for any state $\om$, $\a'(\om)=\om \circ \a^{-1}$, Wigner's theorem is telling us that $W$ must take the state vector that represents $\om$ to a vector that represents $\om \circ \a^{-1}$.

Our second data point is the fact that spontaneous symmetry breaking is possible.  This implies that some symmetries are not unitarily implementable, in the sense that they fail to satisfy Eq. (\ref{EQimplem}) for any unitary $U$.  When Eq. (\ref{EQimplem}) is not satisfied, the symmetry's action on operators (which takes each operator $A$ to $\a(A)$) cannot be implemented by a unitary operator.  Combining our two data points, this means that spontaneously broken symmetries are unitarily implementable as they act on states, but not as they act on operators.

This means that the unitary $W$ must map the vector states of $\Hil$ to those of $\Hil'$ without satisfying Eq.~\ref{EQuequiv} for $U=W$.  In such a case, the representations are unitarily inequivalent even though their Hilbert spaces are related by a unitary operator. There is nothing contradictory about this, since the existence of a unitary operator implies unitary equivalence only if the operator intertwines the representations.  Their respective representations $\pi,\pi'$ of the $C^*$-algebra may nonetheless be preserved by $W$, as long as $W$ does not map the representation $\pi$ pointwise to the representation $\pi'$.  So it may still be that
\begin{equation}
    W \pi(\Alg) = \pi' (\Alg) W  \label{foobar}
\end{equation}
although there are individual operators $A \in \Alg$ for which (\ref{EQuequiv}) does not hold.

In fact, an operator meeting these criteria exists whenever the states of two GNS representations are connected by a symmetry.  We will show in steps that in every such case a unitary $W$ exists which satisfies Eq. (\ref{foobar}), and which implements the symmetry as it acts on states without implementing it as it acts on operators.  First, we establish its existence:

\begin{gwt} Let $\langle \8H,\pi,\Om\rangle$ be a GNS representation
  for $\om$, and let $\langle \8H',\pi',\Om '\rangle$ be a GNS
  representation for $\om\circ\alpha^{-1}$.  Then there is a unique
  unitary operator $W=W_{\pi ,\pi'}:\8H\to \8H'$ such that $W\Om =\Om
  '$, and $W\pi (\alpha ^{-1}(A))= \pi'(A)W$ for all $A\in \2A$.  \emph{(Proof in Appendix 1.)}
\end{gwt}

Since $\a^{-1} (\Alg)=\Alg$ and $W$ intertwines $\pi \circ \a^{-1}$ and $\pi'$, Eq. (\ref{foobar}) follows.  This means that when $\pi$ and $\pi'$ are not unitarily equivalent, $W$ maps between these two representations without mapping $\pi$ pointwise to $\pi'$, which is what we expected.  In other words, $W$ acts as a bijection between the operators of these representations but does not, in general, implement the symmetry as it acts on operators.

We have established that a unitary mapping preserving transition probabilities can exist even between unitarily inequivalent representations.  Indeed, such a mapping always exists in cases of SSB.  This is not yet enough to ensure that the conclusion of Wigner's theorem is true (as it must be).  Wigner's theorem ensures, not just that such a mapping exists, but that every mapping which preserves transition probabilities must be unitary.  This includes every symmetry (whether spontaneously broken or not) as it acts on states.

So we must also show that a spontaneously broken symmetry can be given by a unitary operator in the sense just discussed, without being unitarily implemented --- that is, without satisfying Eq.~\ref{EQimplem}.  To establish this, we will show that $W$ itself implements the symmetry as it acts on states.

Keep in mind that any representation $\pi :\2A \to B(\8H )$ of a $C^*$-algebra gives rise to a map $\pi^*$ of unit vectors of $\8H$ into the state space of $\2A$.  In particular, $$\pi ^*(x)(A) = \langle x,\pi (A)x\rangle ,\qquad (A\in \2A) .$$  We now use this map to show that $W$ implements the symmetry $\a'$ as it acts between the states of representations $\pi$ and $\pi'$.

\begin{cor} Let $(\8H ,\pi ,\Om)$ be a GNS representation for $\om$.
  Then the Wigner unitary $W$ for $\al$ implements the action of $\al$
  on vectors in $\8H$.  That is, $(\pi ')^*(Wx)=\pi ^*(x)\circ \al
  ^{-1}$ for any unit vector $x$ in $\8H$.  \emph{(Proof in Appendix 1.)}\end{cor}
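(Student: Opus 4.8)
The plan is to prove the identity $(\pi')^*(Wx) = \pi^*(x) \circ \al^{-1}$ by a direct computation that simply unfolds the definition of the vector state $\pi^*$ and then applies the intertwining relation furnished by the Representation Wigner Theorem. First I would fix an arbitrary unit vector $x \in \8H$ and an arbitrary $A \in \2A$, and write out the left-hand side:
\[
(\pi')^*(Wx)(A) = \langle Wx,\ \pi'(A)\, Wx \rangle .
\]
The Representation Wigner Theorem supplies a unitary $W : \8H \to \8H'$ satisfying $\pi'(A)\,W = W\,\pi(\al^{-1}(A))$ for every $A \in \2A$. Substituting this into the expression above and using that $W$ is unitary (hence $\langle W\xi, W\eta\rangle = \langle \xi, \eta\rangle$), I obtain
\[
(\pi')^*(Wx)(A) = \langle Wx,\ W\pi(\al^{-1}(A))\,x \rangle = \langle x,\ \pi(\al^{-1}(A))\,x \rangle = \pi^*(x)\bigl(\al^{-1}(A)\bigr) = \bigl(\pi^*(x) \circ \al^{-1}\bigr)(A).
\]
Since $A$ was arbitrary, this yields the claimed equality of states on $\2A$.

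Only a couple of small points need checking along the way. One should note that $\al^{-1}$ is a $*$-automorphism of $\2A$, so that $\pi^*(x) \circ \al^{-1}$ is genuinely a state of $\2A$ and the right-hand side of the corollary is well posed; this is immediate from the earlier discussion of symmetries. One should also be careful that $W$ preserves the full inner product (not merely norms), which is exactly what unitarity delivers. As a consistency check, specializing to $x = \Om$ recovers $(\pi')^*(\Om') = \om \circ \al^{-1}$, which is correct since $\Om'$ is by hypothesis the GNS vector for $\om \circ \al^{-1}$.

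I do not expect any genuine obstacle in this corollary: the substantive work has already been done in establishing the Representation Wigner Theorem, which packages both the existence of $W$ and its key intertwining property, so the corollary is a routine unpacking of definitions. Conceptually, though, this is the step that completes the resolution. It shows that the single unitary $W$ implements the symmetry $\al'$ exactly at the level of states --- indeed on every vector of $\8H$, not merely on the cyclic vector --- even in the SSB case, where $W$ nevertheless fails to implement $\al$ on operators because it does not satisfy Eq.~(\ref{EQuequiv}) for $U = W$, i.e.\ it does not map $\pi$ pointwise onto $\pi'$.
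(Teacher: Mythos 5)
Your proposal is correct and is essentially identical to the paper's own proof: both unfold $(\pi')^*(Wx)(A)=\langle Wx,\pi'(A)Wx\rangle$ and apply the intertwining relation $W\pi(\al^{-1}(A))=\pi'(A)W$ from the Representation Wigner Theorem together with unitarity of $W$ to reduce to $\langle x,\pi(\al^{-1}(A))x\rangle=\pi^*(x)(\al^{-1}(A))$. The only cosmetic difference is that the paper writes the intertwining relation in the conjugated form $\pi(\al^{-1}(A))=W^*\pi'(A)W$ before substituting, whereas you move $W$ across the inner product explicitly; these are the same step.
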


In other words, when we apply $W$ to the state vector $x \in \Hil$ which represents the algebraic state $\pi^*(x)$ in the GNS representation $\pi$, the result is the vector $Wx \in \Hil'$ which represents the state $\a'(\pi^*(x))=\pi^*(x) \circ \a^{-1}$ in the representation $\pi'$.  This is just what it means for $W$ to implement the symmetry as it acts on states.

Finally, we confirm that $W$ does not in general implement the symmetry $\a$ as it acts on operators. This is just to say that it does not in general intertwine the representations $\pi$ and $\pi'$.  In fact, we can show that $W$ intertwines these representations only if it is trivial:
\begin{cor2} If the Wigner unitary $W:\8H\to\8H'$ also induces a
  unitary equivalence between $\pi$ and $\pi'$, then $\al '\circ \pi
  ^*=\pi ^*$.  \label{trivialize} \emph{(Proof in Appendix 1.)} \end{cor2}
That is, every vector state in $\pi$'s Hilbert space is invariant under the symmetry $\a'$, and hence left unchanged by $W$.  This means $W$ must be the identity.

The astute reader may be puzzled by some of the properties we ascribe to the Wigner unitary.  In particular, we've shown that the Wigner unitary, which implements a symmetry as it acts on states, never implements that same symmetry as it acts on operators unless it is the trivial identity operator.  How, then, can a non-trivial symmetry be unitarily implemented on both states and operators (and hence unbroken)?  This sort of puzzle is best resolved by looking at concrete examples of Wigner unitaries in the case of both broken and unbroken symmetries, which examples we provide in Appendix 2.

We've shown that whenever two states are related by a symmetry, a unitary mapping exists between the Hilbert spaces of their GNS representations and has the properties we would expect.  This is the Wigner unitary.  Its existence vindicates Wigner's theorem, in that it shows how the theorem can be true even when spontaneous symmetry breaking prevents a symmetry from being unitarily implemented as it acts on operators.  The seeming paradox is no paradox at all.

\section{Foundational significance}

Besides the dissolution of a confusing paradox, are there foundational implications of this result?  We believe so.  To underscore the foundational importance of our resolution of the paradox, let's briefly explore how it bears on one vexed question in the philosophy of quantum field theory.  What are the necessary conditions for physical equivalence between field-theoretic states?  In AQFT, the representations of the field algebra separate the states into natural ``families:'' the folia of states given by density operators in each representation.  We may therefore ask what conditions must be met for two such families of states -- two folia -- to represent the same set of physical possibilities.

For two folia to be physically equivalent, they must at least be empirically equivalent.  The paradoxical line of reasoning suggests that unitary equivalence is necessary if we want to preserve transition probabilities.   Since a quantum theory's transition probabilities are part of its empirical content, it would seem to follow that the folia of unitarily inequivalent representations cannot predict the same empirical consequences -- making them physically inequivalent by the above reasoning.

This is why Arageorgis, while attempting ``to clarify the
connection between `intertranslatability' [a necessary condition for physical equivalence] and `unitary equivalence,'" writes in his seminal dissertation,
\begin{quote}
Intertranslatability requires a mapping between theoretical
descriptions that preserves the reports of empirical findings.
These are couched in terms of probabilities in quantum theory. And
as Wigner has taught us, the preservation of probabilities in the
Hilbert space formulation implies the existence of a unitary (or
antiunitary)\footnote{Recall that for purposes of this paper we
ignore the distinction between unitary and antiunitary.} operator.
\citep[302 fn 111]{AAfields}
\end{quote}
He takes this point to establish that folia must belong to unitarily equivalent representations if they are to count as physically equivalent.  But his argument includes a false premise: the assumption that the existence of a unitary operator connecting the folia of two representations implies a unitary equivalence between those representations.  As we have shown, though, there is no such implication if the unitary operator is what we've called a Weyl unitary.  Arageorgis's argument is unsound.

This means that at least one significant part of the empirical content of a quantum theory -- its transition probabilities -- can be preserved by a mapping between the folia of two inequivalent representations.  If we further assume (as conventional wisdom dictates) that a quantum theory's symmetries preserve all empirical content, then the folia of at least some pairs of inequivalent representations must be empirically equivalent if spontaneous symmetry breaking is possible.  The notion that unitary equivalence is a necessary condition for physical equivalence should now appear quite suspect.  Insofar as the so-called ``Hilbert space conservative'' interpretation of quantum field theory identifies physical equivalence with unitary equivalence \citep[see][]{LRinterpreting}, that interpretation must come into question as well.

\section*{Appendix 1: Representation Wigner Theorem}

Here we prove the results mentioned in the main text.

\begin{defn} Let $\om$ be a state on a $C^*$-algebra $\2A$, let $\8H$
  be a Hilbert space with $\Om$ a unit vector in $\8H$, and $\pi
  :\2A\to B(\8H )$ a representation of $\2A$.  We say that the triple
  $\langle \8H,\pi ,\Om \rangle$ is a \emph{GNS representation} for
  $\om$ just in case:\begin{enumerate}
  \item $\langle \Om ,\pi (A)\Om\rangle = \om (A)$, for all $A\in
    \2A$, and
  \item $\{ \pi (A)\Om :A\in \2A \}$ is dense in the Hilbert space
    $\8H$. \end{enumerate} \end{defn}

The GNS theorem shows that for each state $\om$, there is a GNS
representation; and that any two GNS representations of $\om$ are
unitarily equivalent.

Let $\2A$ be a $C^*$-algebra, let $\om$ be a state of $\2A$, and let
$\a$ be a $*$-automorphism of $\3A$.  Let $(\3H,\pi ,\Om)$ be the GNS
triple of $\2A$ induced by $\om$, and let $(\3H ',\pi ',\Om ')$ be the
GNS triple of $\2A$ induced by $\om\circ\a^{-1}$.  For brevity, we
sometimes just use $\pi$ and $\pi '$ to denote the corresponding
triples.

\begin{gwt} Let $\langle \8H,\pi,\Om\rangle$ be a GNS representation
  for $\om$, and let $\langle \8H',\pi',\Om '\rangle$ be a GNS
  representation for $\om\circ\alpha^{-1}$.  Then there is a unique
  unitary operator $W=W_{\pi ,\pi'}:\8H\to \8H'$ such that $W\Om =\Om
  '$, and $W\pi (\alpha ^{-1}(A))= \pi'(A)W$ for all $A\in \2A$.
\end{gwt}

\begin{proof} Let $(\8H ,\pi ,\Om)$ be a GNS representation of $\2A$
  for the state $\om$, and let $(\8H ',\pi ',\Om ')$ be a GNS
  representation of $\2A$ for the state $\om\circ \al ^{-1}$.  Define
  $W:\8H\to\8H '$ by setting
$$ W\pi (A)\Om = \pi '(\al (A))\Om ' ,\qquad \forall A\in
\2A.$$ Since $\al (I)=I$, it follows that $W\Om =\Om '$.  Since
$$ \norm{\pi '(\al (A))\Om '}^2 = \langle \Om ',\pi '(\al (A^*A))\Om
'\rangle = \om (\al ^{-1}(\al (A^*A)))=\om (A^*A)=\norm{\pi (A)\Om
}^2,$$ it follows that $W$ is well defined and extends uniquely to a
unitary operator from $\8H$ to $\8H '$.  Note that since $\pi (A)\Om
=W^*W\pi (A)\Om =W^*\pi '(\al (A))\Om '$, it follows that $W^*\pi
'(B)\Om '=\pi (\al ^{-1}(B))\Om$ for all $B\in \2A$.  Therefore,
$$ W^*\pi '(A)W\pi (B)\Om = W^*\pi '(A\al (B))\Om '= \pi (\al
^{-1}(A)B)\Om = \pi (\al ^{-1}(A))\pi (B)\Om ,$$ for all $A,B\in \2A$.
Since the vectors $\pi (B)\Om$, for $B\in \2A$, are dense in $\8H$, it
follows that $W^*\pi '(A)W=\pi (\al ^{-1}(A))$ for all $A\in \2A$.
That is, $W$ implements a unitary equivalence from $\pi \circ \al
^{-1}$ to $\pi '$.

To show the uniqueness of $W$, it suffices to note that $\Om$ is a
cyclic vector for $\pi\circ\al$, $\Om '$ is a cyclic vector for $\pi
'$, and $W\Om =\Om '$.  Thus, there is at most one unitary intertwiner
from $\pi\circ\al ^{-1}$ to $\pi'$ that maps $\Om$ to
$\Om'$. \end{proof}

Note that if $\alpha =\iota$ is the identity automorphism, and if we
take $\langle \8H',\pi',\Om '\rangle = \langle \8H,\pi,\Om \rangle$,
then $I$ satisfies the conditions of the theorem, hence by uniqueness
$W_{\pi,\pi'}=I$.

For the following corollary, recall that any representation $\pi :\2A
\to B(\8H )$ of a $C^*$-algebra gives rise to a map $\pi ^*$ of unit
vectors of $\8H$ into the state space of $\2A$.  In particular,
$$\pi ^*(x)(A) = \langle x,\pi (A)x\rangle ,\qquad (A\in \2A) .$$

\begin{cor} Let $(\8H ,\pi ,\Om)$ be a GNS representation for $\om$.
  Then the Wigner unitary $W$ for $\al$ implements the action of $\al$
  on vectors in $\8H$.  That is, $(\pi ')^*(Wx)=\pi ^*(x)\circ \al
  ^{-1}$ for any unit vector $x$ in $\8H$.  \end{cor}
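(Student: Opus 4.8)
The plan is to reduce the statement to a one-line computation using the intertwining relation supplied by the Representation Wigner Theorem. That theorem gives us a unitary $W:\8H\to\8H'$ with $W\pi(\al^{-1}(A))=\pi'(A)W$ for all $A\in\2A$; since $W$ is unitary this is equivalent to $W^*\pi'(A)W=\pi(\al^{-1}(A))$ for all $A\in\2A$. This is the only fact about $W$ I will need.

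First I would fix a unit vector $x\in\8H$ and note that $Wx$ is again a unit vector (immediate from unitarity of $W$), so $(\pi')^*(Wx)$ is a well-defined state on $\2A$; likewise $\pi^*(x)\circ\al^{-1}$ is a state because $\al^{-1}$ is a $*$-automorphism and hence carries positive elements to positive elements and fixes the identity. Then, for an arbitrary $A\in\2A$, I would simply unwind the definitions:
$$(\pi')^*(Wx)(A)=\langle Wx,\pi'(A)Wx\rangle=\langle x,W^*\pi'(A)Wx\rangle=\langle x,\pi(\al^{-1}(A))x\rangle=\pi^*(x)(\al^{-1}(A)).$$
Since the right-hand side is by definition $\bigl(\pi^*(x)\circ\al^{-1}\bigr)(A)$, and $A$ was arbitrary, the two states agree, which is exactly the claimed identity $(\pi')^*(Wx)=\pi^*(x)\circ\al^{-1}$.

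There is no real obstacle here: the content is entirely carried by the Representation Wigner Theorem, and the corollary is just the translation of its operator-level intertwining relation into the language of the state-valued map $\pi^*$. If anything needs care, it is only the bookkeeping remark that the equality of functionals follows from equality on every $A$, together with the observation that the conjugation $W^*(\cdot)W$ can be moved inside the inner product precisely because $W$ is unitary (so that $\langle Wx,\pi'(A)Wx\rangle=\langle x,W^*\pi'(A)Wx\rangle$ with no adjoint ambiguity). I would present the argument essentially in the displayed form above, prefaced by the two sentences verifying that both sides are genuine states.
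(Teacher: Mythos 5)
Your proposal is correct and follows exactly the paper's own argument: invoke the intertwining relation $W^*\pi'(A)W=\pi(\al^{-1}(A))$ from the Representation Wigner Theorem and unwind the definition of $(\pi')^*(Wx)(A)$ in a single chain of equalities. The extra remarks about both sides being well-defined states are harmless bookkeeping that the paper omits.
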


\begin{proof} By the Theorem, a Wigner unitary $W$ intertwines $\pi\circ
  \al ^{-1}$ and $\pi'$, that is
$$ \pi (\al ^{-1}(A))= W^*\pi '(A)W ,$$
for all $A\in \2A$.  Hence $$ (\pi ')^*(Wx)(A) \:=\: \langle Wx,\pi '
(A)Wx \rangle \: =\:\langle x,W^*\pi '(A)Wx\rangle \:=\: \langle
x,\pi (\al ^{-1}(A))x\rangle\:=\: \pi ^*(x)(\al ^{-1}(A)) ,$$ for all
$A\in \2A$.
\end{proof}

The preceding corollary can be conveniently pictured via a commuting diagram:

$$\bfig
\square/>`<-`<-`>/[S(\2A)`S(\2A)`\8H`\8H ';\al '`\pi ^*`(\pi ')^*`W]
\efig $$ where $S(\2A )$ is the state space of $\2A$, and $\al
':S(\2A)\to S(\2A )$ is the symmetry $\om\mapsto \om \circ \al ^{-1}$.

\begin{cor2} If the Wigner unitary $W:\8H\to\8H'$ also induces a
  unitary equivalence between $\pi$ and $\pi'$, then $\al '\circ \pi
  ^*=\pi ^*$.  \label{trivialize} \end{cor2}

Recall that $\al ':S(\2A )\to S(\2A )$ is defined by $\al '(\om )=\om
\circ \al ^{-1}$.

\begin{proof} If $W$ induces a unitary equivalence between $\pi$ and
  $\pi '$ then
$$ \pi (A)W^* = W^*\pi '(A) = \pi (\al ^{-1}(A))W^* ,$$
for all $A\in \2A$.  Canceling the unitary operator $W^*$ on the right
gives $\pi (A)=\pi (\al ^{-1}(A))$ for all $A\in \2A$, that is $\pi =
\pi \circ \al ^{-1}$.  From the latter equation it clearly follows
that $\pi ^*=\al '\circ \pi ^*$. \end{proof}

\section*{Appendix 2: Properties of the Wigner unitary operator}

We now give a special case of the Representation Wigner Theorem which will illustrate some properties of the Wigner unitary.  But first we need a lemma.

\begin{lemma} If $\langle \8H,\pi ,\Om \rangle$ is a GNS
  representation for the state $\om$ then $\langle \8H,\pi \circ
  \alpha ^{-1},\Om \rangle$ is a GNS representation for the state $\om
  \circ \al ^{-1}$.
\end{lemma}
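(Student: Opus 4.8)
The plan is to verify directly that the triple $\langle \8H, \pi\circ\al^{-1}, \Om\rangle$ satisfies the two defining conditions of a GNS representation for the state $\om\circ\al^{-1}$, as stated in the Definition above. There is no real obstacle here; the argument is a short unwinding of definitions, and the only point that requires any thought is the density condition, where one uses that $\al^{-1}$ is a \emph{bijection} of $\2A$.

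First I would observe that $\pi\circ\al^{-1}$ is indeed a representation of $\2A$ on $\8H$: it is the composition of the $*$-automorphism $\al^{-1}$ with the $*$-homomorphism $\pi$, hence is itself a $*$-homomorphism from $\2A$ into $B(\8H)$ (the same observation made in Section 2 for $\pi\circ\al$). And $\Om$ is a unit vector of $\8H$ by hypothesis, so the triple is at least of the right form.

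Next, for condition (1), I would note that for any $A\in\2A$,
$$\langle \Om, (\pi\circ\al^{-1})(A)\,\Om\rangle = \langle \Om, \pi(\al^{-1}(A))\,\Om\rangle = \om(\al^{-1}(A)) = (\om\circ\al^{-1})(A),$$
where the middle equality is just condition (1) for the original GNS triple $\langle\8H,\pi,\Om\rangle$ applied to the element $\al^{-1}(A)\in\2A$. So $\Om$ reproduces the expectation values of $\om\circ\al^{-1}$.

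Finally, for condition (2), since $\al$ is an automorphism of $\2A$, so is $\al^{-1}$; in particular $\al^{-1}$ maps $\2A$ onto $\2A$. Hence
$$\{(\pi\circ\al^{-1})(A)\,\Om : A\in\2A\} = \{\pi(\al^{-1}(A))\,\Om : A\in\2A\} = \{\pi(B)\,\Om : B\in\2A\},$$
which is dense in $\8H$ precisely because $\langle\8H,\pi,\Om\rangle$ is a GNS representation for $\om$. This establishes condition (2), completing the proof. The one step worth flagging is this use of surjectivity of $\al^{-1}$; everything else is immediate from the definitions.
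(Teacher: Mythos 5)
Your proof is correct and follows essentially the same route as the paper's: verify condition (1) by applying the original GNS property to $\al^{-1}(A)$, and verify condition (2) by using the surjectivity of $\al^{-1}$ to identify $\{\pi(\al^{-1}(A))\Om : A\in\2A\}$ with the dense set $\{\pi(B)\Om : B\in\2A\}$. The paper's version is just a one-sentence compression of the same argument.
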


\begin{proof} Since $\langle \Om ,\pi (\al ^{-1}(A))\Om \rangle = \om
  (\al ^{-1}(A))$ and since $\Om$ is cyclic under $\{ \pi (\al
  ^{-1}(A)) :A\in \mathfrak{A}\}$, it follows that $\langle \8H,\pi
  \circ \al ^{-1},\Om \rangle$ is a GNS representation for $\om \circ
  \al ^{-1}$. \end{proof}

We can now apply the Representation Wigner Theorem to the representations
$\langle \8H,\pi,\Om\rangle$ and $\langle \8H,\pi\circ \alpha
^{-1},\Om\rangle$

\begin{swt} If $W:\8H\to \8H$ is the Wigner operator for the
  representations $\langle \8H,\pi,\Om \rangle$ and $\langle
  \8H,\pi\circ \alpha ^{-1},\Om \rangle$, then $W=I$.
\end{swt}

\begin{proof} By RWT, $W\Om =\Om$ and $W\pi (\al ^{-1}(A))=\pi (\al
  ^{-1}(A))W$ for all $A\in \2A$.  Since $\al$ is an automorphism,
  $W\pi (A)=\pi (A)W$ for all $A\in \2A$.  Hence
$$ W\pi (A)\Om = \pi (A)W\Om = \pi (A)\Om ,$$
for all $A\in \2A$.  Since the set $\{ \pi (A)\Om :A\in
\mathfrak{A}\}$ of vectors is dense in $\8H$, it follows that $Wx=x$
for every vector in $\8H$; that is, $W=I$. \end{proof}

We now apply SRWT to the case of symmetries in elementary quantum mechanics.  Let $\8H$ be a finite-dimensional Hilbert space, and let $U:\8H\to \8H$ be
a unitary operator that induces a symmetry.  Then we have the
following transformations:
\[ \begin{array}{ll}
  \varphi \longmapsto U\varphi & \text{transformed state} \\
  A\longmapsto UAU^* & \text{transformed observable} \end{array} \] Of
course, $B(\8H )$ is a $C^*$-algebra, and each (pure) state of $B(\8H
)$ is represented uniquely by a ray in $\8H$.  The unitary $U$ induces
the automorphism $\alpha (A)=UAU^*$ of $B(\8H )$, as well as the
corresponding state mapping.

In order to apply SRWT, we need to find representations.  The first
representation is $\langle \8H,\iota ,\varphi\rangle$, where $\iota
:B(\8H )\to B(\8H )$ is the identity, and $\varphi$ is an arbitrarily
chosen unit vector.  The second representation is $\langle \8H ,\iota
\circ \al ^{-1},\varphi\rangle$.  By GWT, there is a Wigner unitary
$W:\8H\to \8H$, and by SRWT, $W=I$.

So, in what sense does $W$ induce the symmetry $U$ on states?  Should
we not have $W=U$?  No, because a vector $\varphi$ in $\8H$ names
different states on $B(\8H )$ according to which representation we
consider, either $\iota$ or $\alpha ^{-1}$.  Relative to the first,
$\varphi$ represents the state $A\mapsto \langle \varphi
,A\varphi\rangle$, and relative to the second, $\varphi$ represents
the state $A\mapsto \langle \varphi ,\al ^{-1}(A)\varphi \rangle$.

What $W$ does is to map a vector representing some state $\om$
relative to $\pi$ to a vector representing the state $\om \circ \al
^{-1}$ relative to $\pi\circ \al ^{-1}$.  In the way we have set
things up, $W=1_H$, which just means that if $\varphi$ represents
$\om$ relative to $\pi$, then $\varphi$ represents $\om\circ\al ^{-1}$
relative to $\pi'=\pi\circ \al ^{-1}$.  So, indeed, the identity map
implements the symmetry $\om \mapsto \om \circ \al^{-1}$ of states!

Let us look now, more generally, at the case of an
\emph{unbroken} symmetry.  By hypothesis, the symmetry $\alpha$ is unbroken
just in case the representations $(\8H ,\pi ,\Om )$ and $(\8H
,\pi\circ \al ^{-1},\Om )$ are unitarily equivalent.  That is, there
is a unitary operator $V:H\to H$ such that $V\pi (\al ^{-1}(A)) = \pi
(A)V$.  In fact, in the most interesting case where $\om$ is a pure
state, $V$ can be chosen such that $V=\pi (U)$ for some unitary
operator $U\in \2A$, hence
$$ \pi (\al (A)) \: = \: V\pi (A)V^* \:=\: \pi (UAU^*) ,$$
for all $A\in \2A$.  (To verify the existence of such a $U\in \2A$,
see \citet[730]{kr}.)

Of course, we are still guaranteed the existence of the Wigner Unitary
$W:\8H\to \8H$.  (In fact, we know that $W=I$; but ignore that fact
for now.)  Which operator, $W$ or $V$, implements the symmetry
$\alpha$ on states?  The answer is that they \emph{both} do, but in
different senses.

Compare the following two diagrams:
$$\bfig
\square/>`<-`<-`>/[S(\2A)`S(\2A)`\8H`\8H ;\al '`\pi ^*`(\pi ')^*`W]
\square(1000,0)/>`<-`<-`>/[S(\2A)`S(\2A)`\8H`\8H;\al '`\pi ^*`\pi
^*`V] \efig $$

The square on the left shows the action of the Wigner
unitary $W$ for the special case of the GNS representation $(\8H
,\pi\circ \al ^{-1},\Om)$ for $\om\circ \al ^{-1}$.  The square on the
right shows that action of the unitary $V$ that implements the
equivalence between $\pi$ and $\pi\circ \al ^{-1}$.  The key
difference, of course, is that $V$ implements the symmetry in such a
way that the correspondence between vectors and states can be held
invariant (the vertical arrows are the same), whereas $W$'s
implementation requires a change of correspondence ($\pi ^*$ versus
$(\pi ')^*$).  But a state is a way to map observables to numbers, so
changing the correspondence between vectors and states is equivalent
to leaving this correspondence fixed and instead changing the labels
of observables.  In equation form:
$$ (\pi ')^* = \al '\circ \pi ^* ,$$
i.e.\ the correspondence $(\pi ')^*$ matches vectors with an
observable $A$ in exactly the way that the correspondence $\pi ^*$
matches vectors with the observable $\al ^{-1}(A)$.

\subsection*{Acknowledgements}
Thanks to Wayne Myrvold and Giovanni Valente for detailed comments on
an earlier draft.  DJB would like to thank Laura Ruetsche for several
valuable discussions of the seeming paradox and John Earman for
helpful correspondence.  HPH thanks Jeremy Butterfield, Joe Rachiele,
and Noel Swanson for helping him clarify the contrast between Wigner
and intertwining unitaries.

\bibliography{DAVEbib}

\end{document}